\documentclass[12pt]{article}

\pdfoutput=1


\usepackage{setspace}
\usepackage{geometry}
\usepackage{hyperref}
\usepackage{enumerate}
\usepackage{paralist}
\usepackage[authoryear]{natbib}
\usepackage{amsmath}
\usepackage{amssymb}
\usepackage{amsfonts}
\usepackage{amsthm}
\usepackage{graphicx}
\usepackage[normalem]{ulem}

\setcounter{MaxMatrixCols}{10}

\newtheorem {definition}{Definition}

\newtheorem {proposition}{Proposition}
\newtheorem {lemma}{Lemma}

\newtheorem {fact }{Fact}

\theoremstyle{remark}

\theoremstyle{definition}
\newtheorem{example}{Example}

\theoremstyle{plain} 
\newtheorem*{definition*}{Definition}
\newtheorem*{theorem*}{Theorem}
\newtheorem*{proposition*}{Proposition}
\newtheorem*{lemma*}{Lemma}
\newtheorem*{claim*}{Claim}
\newtheorem*{subclaim*}{Subclaim}
\newtheorem*{observation*}{Observation}
\newtheorem*{conjecture*}{Conjecture}
\newtheorem*{fact*}{Fact}
\newtheorem*{assumption*}{Assumption}
\theoremstyle{remark} 
\newtheorem*{remark*}{Remark}
\newtheorem*{example*}{Example}

\newtheoremstyle{mystyle}
  {}
  {}
  {\normalfont}
  {}
  {\bfseries}
  {.}
  { }
  {\thmname{#1}\thmnumber{ #2}\thmnote{ (#3)}}

\theoremstyle{mystyle}

\usepackage{ifthen}
\newboolean{showcomments}

\newcommand{\marcelo}[1]{ \ifthenelse{\boolean{wordcount}}{}{\ifthenelse{\boolean{showcomments}}
{\textcolor{magenta}{(M:  #1)}}{}}{}}
\newcommand{\kirill}[1]{ \ifthenelse{\boolean{wordcount}}{}{\ifthenelse{\boolean{showcomments}}
{\textcolor{magenta}{(K:  #1)}}{}}{}}
\newcommand{\leeat}[1]{ \ifthenelse{\boolean{wordcount}}{}{\ifthenelse{\boolean{showcomments}}
{\textcolor{magenta}{(L:  #1)}}{}}{}}

\newboolean{showauxiliar}
\newcommand{\auxiliar}[1]{  \ifthenelse{\boolean{wordcount}}{}{\ifthenelse{\boolean{showauxiliar}}
{#1}{}}{}}

\newboolean{wordcount}
\newcommand{\optional}[1]{\ifthenelse{\boolean{wordcount}}{}{#1}{}}

\usepackage{cleveref}
\makeatletter
\renewenvironment{proof}[1][\proofname] {\par\pushQED{\qed}\normalfont\topsep6\p@\@plus6\p@\relax\trivlist\item[\hskip\labelsep\bfseries#1\@addpunct{.}]\ignorespaces}{\popQED\endtrivlist\@endpefalse}
\makeatother
\crefname{theorem}{theorem}{theorems}
\newenvironment{delayedproof}[1]
 {\begin{proof}[Proof of \Cref{#1}]}
 {\end{proof}}

\usepackage{titlesec}
\titleformat{\section}
  {\scshape\filcenter}{\thesection.}{1em}{}
\titleformat{\subsection}[runin]
        {\normalfont\bfseries}
        {\thesubsection.}
        {0.5em}
        {}
        [.]
\titleformat{\subsubsection}[runin]
        {\normalfont\bfseries}
        {\thesubsubsection.}
        {0.5em}
        {}
        [.]

\usepackage{tikz}
\usetikzlibrary{calc,positioning,decorations.pathreplacing}

\newcommand\xqed[1]{%
  \leavevmode\unskip\penalty9999 \hbox{}\nobreak\hfill
  \quad\hbox{#1}}
\newcommand\demo{\xqed{$\triangle$}}
\usepackage{mathtools}
\usepackage{blkarray}

\usepackage{blkarray}
\usepackage{mathtools}
\usepackage{colortbl}
\usepackage{bm}

\usepackage{caption}
\usepackage{subcaption}

\usepackage{nicematrix}

\usepackage{xparse}
\usepackage{tikz}
\usetikzlibrary{calc,matrix,patterns,shadings,backgrounds}
\pgfdeclarelayer{myback}
\pgfsetlayers{myback,background,main}

\tikzset{myfillcolor/.style = {draw,fill=#1}}%

\NewDocumentCommand{\highlight}{O{blue!40} m m}{%
\draw[myfillcolor=#1] (#2.north west)rectangle (#3.south east);
}

\NewDocumentCommand{\vshade}{O{blue!40} O{white} m m}{%
\draw[bottom color =#1,top color=#2, draw=none] (#3.north west)rectangle (#4.south east);
}

\NewDocumentCommand{\oshade}{O{blue!40} O{white} m m}{%
\draw[right color =#1,left color=#2, draw=none] (#3.north west)rectangle (#4.south east);
}

\NewDocumentCommand{\inshade}{O{blue!40} O{white} m m}{%
\draw[inner color =#1,outer color=#2, draw=none] (#3.north west)rectangle (#4.south east);
}

\NewDocumentCommand{\fillpattern}{O{north west lines} O{blue!50} m m}{%
\draw[pattern=#1, pattern color=#2, draw=none] (#3.north west)rectangle (#4.south east);
}

\definecolor{princetonorange}{rgb}{1.0, 0.56, 0.0}

\usepackage{titling}
\usepackage{atbegshi}
\usepackage{dsserif}
\DeclareMathAlphabet{\mathpzc}{OT1}{pzc}{m}{it}

\definecolor{orange}{RGB}{238,136,102}
\definecolor{lightblue}{RGB}{119,170,221}
\definecolor{pear}{RGB}{187,204,51}

\newtheoremstyle{special}
    {\topsep}
    {\topsep}
    {\itshape}
    {}
    {\bfseries}
    {}
    {0.5em}
    {{\thmname{#1}\thmnumber{ #2$^{\bm*}\!$.}\thmnote{\ \textmd{(#3)}}}}

\theoremstyle{special}

\doublespacing

\geometry{left=1in,right=1in,top=1in,bottom=1in}

\hypersetup {colorlinks=true, linkcolor=blue,
citecolor = blue}

\begin{document}

\title{Centralized Matching with Incomplete Information}
\author{
Marcelo Ariel Fernandez\thanks{Department of Economics, Johns Hopkins University, fernandez@jhu.edu}, Kirill Rudov\thanks{Department of Economics, Princeton University, krudov@princeton.edu}, and Leeat Yariv\thanks{Department of Economics, Princeton University, CEPR, and NBER, lyariv@princeton.edu} \thanks{We thank Muriel Niederle for many conversations that inspired this project as well as the Editor Dirk Bergemann, four anonymous reviewers, Yeon-Koo Che, Laura Doval, Federico Echenique, Nicole Immorlica, and Fuhito Kojima for very helpful comments and suggestions. We gratefully acknowledge financial support from the National Science Foundation through grant SES-1629613 and from the William S. Dietrich II Economic Theory Center at Princeton University.}
}
\date{ July 6, 2021}

\maketitle

\begin{abstract}
We study the impacts of incomplete information on centralized one-to-one matching markets. We focus on the commonly used Deferred Acceptance
mechanism \citep{gale_college_1962}. We show that many complete-information results are fragile to a small infusion of uncertainty about others' preferences.

\end{abstract}

\newpage

\section{Introduction}
\subsection{Overview}

This paper demonstrates the impacts of incomplete information on matching clearinghouses' outcomes. The large literature on two-sided matching has been successful in its application to various markets, ranging from the matching of doctors to residency positions to the matching of kids to schools. Most of this literature assumes complete information: participants are perfectly informed of all other participants' preferences, in addition to their own. We introduce uncertainty about others' preferences in a centralized matching setting employing the celebrated Deferred Acceptance algorithm (DA), introduced by \cite{gale_college_1962}. We show that several canonical results break down even when minimal such uncertainty is added. 

Stability has been of central importance in the design of matching clearinghouses. In a variety of markets, studies have documented the persistence of stable centralized clearinghouses, and the collapse of others \citep*{roth_jumping_1994, roth_economist_2002, mckinney_collapse_2005}. DA clearinghouses are prevalent in applications. They implement stable outcomes for reported preferences and are incentive compatible when cores are small.\footnote{With complete information, DA mechanisms exhibit many appealing features relative to other stable mechanisms. In particular, DA mechanisms cannot be improved upon in terms of manipulability, see \cite{van2019deferred}.} Even with incomplete information, there is arguably a hope that generated outcomes would be ex-post (or complete-information) stable, at least with singleton cores. Otherwise, interactions following the clearinghouse's operations would potentially undo matches. For example, in the medical residency labor market, a newly-minted doctor matched to her, say, third choice could call her top two choices and form a blocking pair, even absent complete information, see \cite{roth_two-sided_1992}. Indeed, concern about such incidents was part of the impetus for introducing centralized clearinghouses to begin with. We therefore analyze conditions under which ex-post stable outcomes are to be expected when cores are small.

There are two main messages. First, equilibria implementing the ex-post stable outcomes always exist, and in some classes of incomplete-information economies---e.g., ones with assortative preferences on either side---they are unique (Propositions~\ref{prop_uniqueness}, \ref{prop_uniqueness_dropping}). Second, however, even complete-information versions of such economies are fragile to the addition of minimal uncertainty about others' preferences. Maintaining singleton cores, a vanishing fraction of market participants can be added so that many features of stable outcomes break down in equilibrium and a large fraction of participants is affected (Propositions~\ref{prop_fragility}, \ref{prop_fragility_assort}). 

Our constructions illustrate that many classical results are overturned in the presence of incomplete information. Equilibrium outcomes corresponding to unstable matchings may be desirable for the \textit{receiving} side in a DA clearinghouse, despite the traditional view of the proposing side as advantaged. Furthermore, the set of matched individuals may differ across equilibrium outcomes. Unlike in the complete-information benchmark, equilibrium selection can offer a useful instrument for influencing who gets matched. 
We also show that several frequently-used technical simplifications are invalid when information is incomplete. In particular, best-response strategy sets do not necessarily include truncation strategies.

Taken together, these results suggest the importance of accounting for details pertaining to the information market participants have in centralized matching clearinghouses.

\subsection{Related Literature}

\cite{roth_two-sided_1989} offers a particular example of a market with incomplete information on \textit{both} sides in which there exists no stable mechanism implementing the (complete-information) stable matching for each preference realization. Importantly, in his example, preference profiles that can conceivably be realized are associated with \textit{multiple} stable matchings.\footnote{\cite{roth_truncation_1999} consider a class of economies in which there are always optimal truncation strategies in the firm-proposing DA. \cite{coles2014optimal} identify optimal truncation strategies in a large class of markets. We illustrate cases in which truncation strategies are sub-optimal.}
 
With incomplete information, \cite{ehlers_incomplete_2007} link singleton cores and truthful preference revelation as an ordinal Bayesian Nash equilibrium. We assume a singleton core state-by-state. Consequently, core outcomes can be implemented in equilibrium. We show cases that yield \textit{other} equilibrium outcomes.\footnote{\cite{immorlica_information_2020} examine how the design of school-choice systems affects information acquisition. 
\cite{fernandez2020deferred} shows that DA only supports stable outcomes when agents avoid regret.
}

In the decentralized, cooperative setting, there has been a long-standing quest for a natural stability notion allowing for incomplete information. \cite{liu_stable_2014} and \cite{liu_stability_2020} offer such a notion for matching markets with transfers and one-sided incomplete information, similar to ours. \cite{bikhchandani_stability_2017} suggests such a notion for similar markets that do not allow transfers.\footnote{
\cite*{chakraborty_two-sided_2010} study a school-choice setting with one-sided incomplete information. They suggest defining stability together with the underlying clearinghouse. Several papers allow for incomplete information when modelling decentralized interactions as a non-cooperative dynamic game (see \citealp*{ferdowsian_decentralized_2021} and
references therein).}  

A growing empirical literature estimates preferences in centralized matching markets using constraints implied by stability (see review in \citealp*{chiappori_econometrics_2016} as well as \citealp*{agarwal_empirical_2015} and \citealp*{hsieh_understanding_2012}). Our results provide caution for this approach when preference information may be incomplete. Market features guaranteeing small cores when information is complete (see, e.g., \citealp*{ashlagi2017unbalanced} and references therein), may be insufficient for ensuring unique stable equilibrium outcomes. Furthermore, observed small cores for \textit{reported} preferences do not guarantee their truthfulness. 

\section{The Model}\label{Model}

\subsection{The Matching Economy}

A \textit{matching market} is a triplet $\mathcal{M}=(F,W,U)$
composed of a finite set of firms $F=\{f_i\}_{i\in[m]}$, where $[m] = \{1,2,\ldots,m\}$; a finite set of workers $W=\{w_j\}_{j \in [n]}$; and match utilities $U=\{
u_{ij}^{f}, u_{ij}^{w}\}_{(i,j) \in [m] \times [n]}$. For each pair $(i,j)$, $u_{ij}^{f}$ is firm $f_{i}$'s utility from matching with worker $w_{j}$ and $u_{ij}^{w}$ is worker $w_{j}$'s utility from matching with firm $f_{i}$. Let $u_{i\varnothing }^{f}$ and $u_{\varnothing j}^{w}$ denote the utilities of firm $f_{i}$ and worker $w_{j}$ from remaining unmatched, respectively. Without loss of generality, all utilities from being unmatched are normalized to zero, i.e. $u_{i\varnothing }^{f}=u_{\varnothing j}^{w}=0$ for all $i,j$. We assume all preferences are strict. Throughout, we focus on markets where all worker-firm pairs are mutually acceptable; i.e., $u_{ij}^{f}>u_{i\varnothing }^{f}$ and $u_{ij}^{w}>u_{\varnothing j}^{w}$.\footnote{Our negative results do not rely on all agents being acceptable.}

An \textit{economy} is a quintuple $\mathcal{E}=(F,W,\left\{U(\theta )\right\} _{\theta \in \Theta },\Theta ,\Psi )$, where $\Theta$ is a finite set of states with each $\theta \in \Theta $ corresponding to a different market $\mathcal{M}(\theta)=(F,W,U(\theta ))$ with the same set of firms $F$ and workers $W$, and $\Psi $ is a probability distribution over states. Without loss of generality, we assume $\Psi $ has full support on $\Theta$.

\subsection{The Game and Equilibrium Concept}\label{Game}
We consider a \textit{centralized matching economy game} in which, at the outset, a state $\theta \in \Theta$ is selected according to $\Psi$. All firms are informed of the realized state $\theta$ and their match utilities in that state---their ``types'' are fully revealing. Each worker $w_{j}$ is privately informed only of his preferences, $u_{ij}^{w}(\theta )$ for $i \in [m] \cup \{\varnothing\}$, which constitute his ``type.'' We assume workers' private information is not revealing of the state; i.e. for any $\theta,\theta'\in \Theta $, $u_{ij}\equiv u_{ij}^{w}(\theta)=u_{ij}^{w}(\theta')$. These match utilities induce an ordinal preferences profile $\succ =\{\succ_{f_{i}},\succ_{w_{j}}\}_{(i,j)\in [m] \times [n]}$. In the Online Appendix, we consider more general environments, allowing for two-sided incomplete information or richer private types.

Participants simultaneously submit rank-ordered lists over acceptable partners to a centralized clearinghouse, which generates a matching using the firm-proposing DA algorithm.

Since truthful reporting is weakly dominant for firms \citep{roth_two-sided_1989}, we focus on Bayesian Nash equilibria (hereafter, BNE) with truthful firms.\footnote{Any worker strategy that lists the most preferred firm first is weakly undominated, see  \citet{roth_two-sided_1992}. In our analysis, the distinction between ex-ante and interim weakly undominated strategies has no bite, see details in the Online Appendix.}

In this paper, we illustrate how ``small'' deviations from the complete-information benchmark can alter outcomes dramatically in a heavily utilized clearinghouse. In particular, we introduce uncertainty about others' preferences only on one market side. Why is the uninformed side of the market (the workers) on the algorithm's receiving side? Otherwise, since firms are informed of the state, and truth-telling is weakly dominant for the proposing side, the analysis would reduce to that of complete information. Naturally, in general, information may be incomplete on both market sides and a designer may not know which side is better informed. As we show in the Online Appendix, our insights  on the impacts of incomplete information hold for a general class of stable mechanisms, not just the firm-proposing DA. We return to this point when discussing our results.

In the complete-information benchmark, incentive compatibility issues are associated with multiplicity of stable matchings. In order to isolate the impacts of incomplete information on strategic behavior, we assume that each market in the support of $\Psi$ has a unique stable matching.\footnote{Recent literature has identified various conditions under which large markets entail small cores (see, e.g., \citealp{Immorlica2005} and \citealp{ashlagi2017unbalanced}). Our analysis speaks to such settings.}   
All participants reporting their preferences truthfully constitutes a BNE in weakly undominated strategies. Therefore,

\setcounter{proposition}{-1}
\begin{proposition} 
Any economy admits a BNE in weakly undominated strategies implementing the unique (complete-information) stable outcome in each state.
\end{proposition}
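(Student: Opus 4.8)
The plan is to show that the fully truthful profile constitutes the desired equilibrium. In this profile each firm $f_i$ reports its true preference in the realized state (which it observes), and each worker $w_j$ submits its true ranking. Two observations make this well defined and pin down its outcome. First, because a worker's utilities $u_{ij}$ do not vary with $\theta$, each worker has a single type and its truthful report is one fixed rank-ordered list, independent of the (unobserved) state. Second, when everyone reports truthfully, in each state $\theta$ the firm-proposing DA runs on the true profile $\succ(\theta)$ and hence returns the firm-optimal stable matching of $\mathcal{M}(\theta)$; by the singleton-core assumption this is the unique stable matching, call it $\mu_\theta$. Thus the truthful profile implements the unique stable outcome in every state, and it remains only to verify the equilibrium and undominance requirements.

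First I would dispatch the firms and the undominance claims, which are immediate. Truthful reporting is weakly dominant for firms \citep{roth_two-sided_1989}, so no firm gains from deviating and truth-telling is a fortiori weakly undominated for them. For a worker, the truthful list ranks its most preferred firm first, and any such list is weakly undominated \citep{roth_two-sided_1992}; hence the workers' truthful strategies are weakly undominated as well.

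The substantive step is workers' incentives, which I would argue state by state. Fix a worker $w_j$ and suppose every other participant is truthful. In any fixed state $\theta$, all of $w_j$'s opponents submit their true reports, so $w_j$ effectively faces the firm-proposing DA against truthful opponents in the complete-information market $\mathcal{M}(\theta)$. By the classical bound on manipulation for the receiving side of the firm-proposing DA, no report by $w_j$ can produce a firm that $w_j$ strictly prefers to its worker-optimal stable partner in $\mathcal{M}(\theta)$ \citep{roth_two-sided_1992}; under a singleton core that partner is exactly $\mu_\theta(w_j)$, which is what truthful reporting delivers. Therefore truthful reporting is an ex-post best response---optimal in every single state---so, taking expectations under $\Psi$ with arbitrary weights, it is a best response, which establishes the BNE property and completes the argument.

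The crux, and the step that deserves the most care, is the last one: it is exactly the singleton-core assumption that makes truthful reporting optimal. The firm-proposing DA hands the receiving side its worker-\emph{pessimal} stable partner, while the manipulation bound only prevents a worker from exceeding its worker-\emph{optimal} partner; in general these differ, leaving room for profitable manipulation (e.g.\ via truncation). Only because the two coincide state by state---by the singleton-core hypothesis---does the truthful strategy sit at the top of what a worker can achieve in each state, and the ex-post nature of this conclusion is what renders the prior $\Psi$ irrelevant.
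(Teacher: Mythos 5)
Your proof is correct and follows the same route the paper takes (the paper merely asserts, just before the proposition, that truthful reporting is a BNE in weakly undominated strategies given the state-by-state singleton core): truthful firms by weak dominance, weakly undominated truthful worker reports, and the Demange--Gale--Sotomayor manipulation bound collapsing onto the unique stable partner in each state so that truth-telling is an ex-post best response. You have simply supplied the details the paper leaves implicit, and your identification of the singleton-core assumption as the crux is exactly right.
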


Throughout, we refer to an outcome as ``unstable'' if it is unstable for the \textit{true} preferences.

\subsection{Preference Assumptions}

The literature has not yet identified general necessary and sufficient conditions for a market to exhibit a unique stable matching. Throughout our analysis, we focus on the \textit{Sequential Preference Condition} (hereafter, SPC), first introduced by \cite{eeckhout2000uniqueness}.

Formally, a market $\mathcal{M}'=(F',W',U')$ is a
\textit{sub-market} of the original market $\mathcal{M}=(F,W,U)$ if $F'\subseteq 
F$, $W'\subseteq W$, and $U'$ is induced by $U$ when restricted to $F' \times W'$. We say that $(f,w)$ is a \textit{top-top match} for sub-market $\mathcal{M}'$ if, for firm $f$, worker $w$ is the favorite worker in $\mathcal{M}'$, and vice versa. A market satisfies the \textit{SPC} if, up to relabeling, there exists an ordering of the firms $f_1,f_2,\ldots,f_m \in F$ and an ordering of the workers $w_1,w_2,\ldots,w_n \in W$ such that for any $i \leq \min(m, n)$, $(f_i,w_i)$ is a top-top match for the sub-market induced by $\{f_j,w_j\}_{j \geq i}$. For these orderings, we say a pair $(f_i,w_i)$, $i \leq \min(m, n)$, and its respective agents have \textit{order} $i$.

Any market that satisfies the SPC has a unique stable matching that can be derived by partnering top-top pairs in sequence. The condition generalizes numerous others.\footnote{These include $\alpha$-reducibility \citep{clark_uniqueness_2006}, the co-ranking condition \citep{legros_co-ranking_2010}, the universality condition \citep{holzman_matching_2014}, the aligned preferences condition \citep*{ferdowsian_decentralized_2021}, and oriented preferences \citep{reny_simple_2021}.} In particular, the SPC is satisfied for markets in which firms (workers) share the same ranking of workers (firms), corresponding to firms (workers) having \textit{assortative preferences}. 

A given economy $\mathcal{E}=(\left\{\mathcal{M}(\theta)\right\} _{\theta \in \Theta },\Theta ,\Psi )$ satisfies the \textit{SPC} if each market $\mathcal{M}(\theta)$ in its support satisfies the SPC for possibly state-specific orderings of firms $f_{1|\theta},f_{2|\theta},\ldots,f_{m|\theta}$ and workers $w_{1|\theta},w_{2|\theta},\ldots,w_{n|\theta}$. 

As it turns out, the SPC alone does not guarantee a unique equilibrium outcome (see the Online Appendix). The following restriction, however, will be useful for identifying economies that do.

\begin{definition}
An economy $\mathcal{E}=(\left\{\mathcal{M}(\theta)\right\} _{\theta \in \Theta },\Theta ,\Psi )$ satisfies the \textit{SPC*} if it satisfies the SPC and, for any state $\theta \in \Theta$, and any order $i \leq \min(m, n)$, 
{\setlength{\abovedisplayskip}{3pt}
\setlength{\belowdisplayskip}{3pt}
\begin{equation*}
        \text{if $f \succ_{w_{i|\theta}} f_{i|\theta}$, then for any $\theta'$, there exists $i' < i$ (that may depend on $\theta'$) with $f = f_{i'|\theta'}$.}
\end{equation*}}
\end{definition}
The SPC* implies that if, in some state, a worker of a given order $i$ prefers some firm over his stable partner in this state, then in \textit{any} state, this firm must prefer her stable (state-specific) partner over the corresponding (state-specific) worker of the same order $i$.  
In other words, if some firm is ``unreachable'' for a worker, it is ``unreachable'' for other workers of the same order in any of the possible states.

Our economy game, and the SPC*, do not treat firms and workers symmetrically. In particular, an economy with (possibly state-dependent) assortative preferences for firms may violate the SPC* (see the Online Appendix).\footnote{Any economy with assortative preferences for workers satisfies the SPC*.} In general, the SPC* holds whenever the order of firms induced by the SPC is state-independent.

\section{Motivating Example}
\label{exmp_motivating}

We start by illustrating a simple incomplete-information economy in which unstable outcomes result from equilibria in weakly undominated strategies.

\newpage

\begin{example}

Consider an economy with three firms and three workers, $m=n=3$, and two states of the world, $\Theta=\{1,2\}$, which are equally likely. Figure~\ref{fig_motivating} describes, for state $\theta \in \Theta$, the preferences. In this matrix notation, rows correspond to firms and columns correspond to workers: 
$u_{ij}(\theta)=(u_{ij}^{f}(\theta),u_{ij}^{w})$.  
We assume that remaining unmatched generates a payoff of $0$ for any market participant.

\begin{figure}[htb]
\begin{center}
\begin{tikzpicture}
\matrix (m)[matrix of nodes, style={nodes={color = white, rectangle,draw,minimum width=3em, line width=0pt}}, minimum height=2em, row sep=-\pgflinewidth, column sep=-\pgflinewidth]
{
$\bm{\mu}$	\\ 
\color{black}$\bm{\mu}$		\\ 
$\bm{\mu}$	\\
};
\end{tikzpicture}
\quad
\begin{tikzpicture}
\matrix (m)[matrix of nodes, style={nodes={rectangle,draw,minimum width=3em}}, minimum height=2em, row sep=-\pgflinewidth, column sep=-\pgflinewidth]
{
$\bm{3,2}$		& $1,5$		 	& $2,2$		\\ 
$2,5$				& $\bm{3,2}$		& $1,5$		\\ 
$1,1$				& $2,1$			& $\bm{3,1}$	\\
};

\begin{pgfonlayer}{myback}
\node at (0,1.6) {$U(1)$};
\end{pgfonlayer}
\end{tikzpicture}
\quad
\begin{tikzpicture}
\matrix (m)[matrix of nodes, style={nodes={rectangle,draw,minimum width=3em}}, minimum height=2em, row sep=-\pgflinewidth, column sep=-\pgflinewidth]
{
$\bm{3,2}$		& $1,5$		 	& $2,2$		\\ 
$1,5$				& $\bm{3,2}$		& $2,5$		\\ 
$1,1$				& $2,1$			& $\bm{3,1}$	\\
};

\begin{pgfonlayer}{myback}
\node at (0,1.6) {$U(2)$};
\end{pgfonlayer}
\end{tikzpicture}
\quad
\begin{tikzpicture}
\matrix (m)[matrix of nodes, style={nodes={color = white, rectangle,draw,minimum width=10em, text width = 8.5em, line width=0pt}}, minimum height=2em, row sep=-\pgflinewidth, column sep=-\pgflinewidth,
    column 1/.style={anchor=base west}]
{
\color{black}$\succ(w_{1}): f_{2}, f_{1}, f_{3}$	\\ 
\color{black}$\succ(w_{2}): f_{1}, f_{2}, f_{3}$	\\ 
\color{black}$\succ(w_{3}): f_{2}, f_{1}, f_{3}$	\\
};

\begin{pgfonlayer}{myback}
\node at (0,1.6) {BNE profile};
\end{pgfonlayer}
\end{tikzpicture}
\begin{tikzpicture}
\matrix (m)[matrix of nodes, style={nodes={color = white, rectangle,draw,minimum width=3em, line width=0pt}}, minimum height=2em, row sep=-\pgflinewidth, column sep=-\pgflinewidth]
{
$\lambda_1$	\\ 
\color{black}$\lambda_1$		\\ 
$\lambda_1$	\\
};
\begin{pgfonlayer}{myback}
\fillpattern[north east lines][orange]{m-2-1}{m-2-1}
\end{pgfonlayer}
\end{tikzpicture}
\quad
\begin{tikzpicture}
\matrix (m)[matrix of nodes, style={nodes={rectangle,draw,minimum width=3em}}, minimum height=2em, row sep=-\pgflinewidth, column sep=-\pgflinewidth]
{
$\bm{3,2}$		& $1,5$		 	& $2,2$		\\ 
$2,5$				& $\bm{3,2}$		& $1,5$		\\ 
$1,1$				& $2,1$			& $\bm{3,1}$	\\
};

\begin{pgfonlayer}{myback}
\fillpattern[north east lines][orange]{m-2-1}{m-2-1}
\fillpattern[north east lines][orange]{m-1-2}{m-1-2}
\fillpattern[north east lines][orange]{m-3-3}{m-3-3}
\end{pgfonlayer}
\end{tikzpicture}
\quad
\begin{tikzpicture}
\matrix (m)[matrix of nodes, style={nodes={rectangle,draw,minimum width=3em}}, minimum height=2em, row sep=-\pgflinewidth, column sep=-\pgflinewidth]
{
$\bm{3,2}$		& $1,5$		 	& $2,2$		\\ 
$1,5$				& $\bm{3,2}$		& $2,5$		\\ 
$1,1$				& $2,1$			& $\bm{3,1}$	\\
};

\begin{pgfonlayer}{myback}
\oshade[white][white]{m-1-1}{m-1-1}
\oshade[white][white]{m-1-3}{m-1-3}
\oshade[white][white]{m-2-2}{m-2-2}
\oshade[white][white]{m-2-3}{m-2-3}
\oshade[white][white]{m-3-1}{m-3-1}
\oshade[white][white]{m-3-2}{m-3-2}

\fillpattern[north east lines][orange]{m-3-1}{m-3-1}
\fillpattern[north east lines][orange]{m-1-2}{m-1-2}
\fillpattern[north east lines][orange]{m-2-3}{m-2-3}
\end{pgfonlayer}
\end{tikzpicture}
\quad
\begin{tikzpicture}
\matrix (m)[matrix of nodes, style={nodes={color = white, rectangle,draw,minimum width=10em, text width = 8.5em, line width=0pt}}, minimum height=2em, row sep=-\pgflinewidth, column sep=-\pgflinewidth,
    column 1/.style={anchor=base west}]
{
\color{black}$\succ'(w_{1}): f_{2}, f_{3}$	\\ 
\color{black}$\succ'(w_{2}): f_{1}, f_{2}, f_{3}$   \\ 
\color{black}$\succ'(w_{3}): f_{2}, f_{3}$	\\
};
\end{tikzpicture}\\
\begin{tikzpicture}
\matrix (m)[matrix of nodes, style={nodes={color = white, rectangle,draw,minimum width=3em, line width=0pt}}, minimum height=2em, row sep=-\pgflinewidth, column sep=-\pgflinewidth]
{
$\lambda_2$	\\ 
\color{black}$\lambda_2$		\\ 
$\lambda_2$	\\
};
\begin{pgfonlayer}{myback}
\fillpattern[north east lines][lightblue]{m-2-1}{m-2-1}
\end{pgfonlayer}
\end{tikzpicture}
\quad
\begin{tikzpicture}
\matrix (m)[matrix of nodes, style={nodes={rectangle,draw,minimum width=3em}}, minimum height=2em, row sep=-\pgflinewidth, column sep=-\pgflinewidth]
{
$\bm{3,2}$		& $1,5$		 	& $2,2$		\\ 
$2,5$				& $\bm{3,2}$		& $1,5$		\\ 
$1,1$				& $2,1$			& $\bm{3,1}$	\\
};

\begin{pgfonlayer}{myback}
\fillpattern[north east lines][lightblue]{m-2-1}{m-2-1}
\fillpattern[north east lines][lightblue]{m-3-2}{m-3-2}
\fillpattern[north east lines][lightblue]{m-1-3}{m-1-3}
\end{pgfonlayer}
\end{tikzpicture}
\quad
\begin{tikzpicture}
\matrix (m)[matrix of nodes, style={nodes={rectangle,draw,minimum width=3em}}, minimum height=2em, row sep=-\pgflinewidth, column sep=-\pgflinewidth]
{
$\bm{3,2}$		& $1,5$		 	& $2,2$		\\ 
$1,5$				& $\bm{3,2}$		& $2,5$		\\ 
$1,1$				& $2,1$			& $\bm{3,1}$	\\
};

\begin{pgfonlayer}{myback}
\oshade[white][white]{m-1-1}{m-1-1}
\oshade[white][white]{m-1-3}{m-1-3}
\oshade[white][white]{m-2-2}{m-2-2}
\oshade[white][white]{m-2-3}{m-2-3}
\oshade[white][white]{m-3-1}{m-3-1}
\oshade[white][white]{m-3-2}{m-3-2}

\fillpattern[north east lines][lightblue]{m-3-1}{m-3-1}
\fillpattern[north east lines][lightblue]{m-1-2}{m-1-2}
\fillpattern[north east lines][lightblue]{m-2-3}{m-2-3}
\end{pgfonlayer}
\end{tikzpicture}
\quad
\begin{tikzpicture}
\matrix (m)[matrix of nodes, style={nodes={color = white, rectangle,draw,minimum width=10em, text width = 8.5em, line width=0pt}}, minimum height=2em, row sep=-\pgflinewidth, column sep=-\pgflinewidth,
    column 1/.style={anchor=base west}]
{
\color{black}$\succ''(w_{1}): f_{2}, f_{3}$	\\ 
\color{black}$\succ''(w_{2}): f_{1}, f_{3}, f_{2}$   \\ 
\color{black}$\succ''(w_{3}): f_{2}, f_{1}, f_{3}$	\\
};
\end{tikzpicture}
\\
\begin{tikzpicture}
\matrix (m)[matrix of nodes, style={nodes={color = white, rectangle,draw,minimum width=3em, line width=0pt}}, minimum height=2em, row sep=-\pgflinewidth, column sep=-\pgflinewidth]
{
$\lambda_3$	\\ 
\color{black}$\lambda_3$		\\ 
$\lambda_3$	\\
};
\begin{pgfonlayer}{myback}
\fillpattern[north east lines][pear]{m-2-1}{m-2-1}
\end{pgfonlayer}
\end{tikzpicture}
\quad
\begin{tikzpicture}
\matrix (m)[matrix of nodes, style={nodes={rectangle,draw,minimum width=3em}}, minimum height=2em, row sep=-\pgflinewidth, column sep=-\pgflinewidth]
{
$\bm{3,2}$		& $1,5$		 	& $2,2$		\\ 
$2,5$				& $\bm{3,2}$		& $1,5$		\\ 
$1,1$				& $2,1$			& $\bm{3,1}$	\\
};

\begin{pgfonlayer}{myback}
\fillpattern[north east lines][pear]{m-2-1}{m-2-1}
\fillpattern[north east lines][pear]{m-1-2}{m-1-2}
\end{pgfonlayer}
\end{tikzpicture}
\quad
\begin{tikzpicture}
\matrix (m)[matrix of nodes, style={nodes={rectangle,draw,minimum width=3em}}, minimum height=2em, row sep=-\pgflinewidth, column sep=-\pgflinewidth]
{
$\bm{3,2}$		& $1,5$		 	& $2,2$		\\ 
$1,5$				& $\bm{3,2}$		& $2,5$		\\ 
$1,1$				& $2,1$			& $\bm{3,1}$	\\
};

\begin{pgfonlayer}{myback}
\oshade[white][white]{m-1-1}{m-1-1}
\oshade[white][white]{m-1-3}{m-1-3}
\oshade[white][white]{m-2-2}{m-2-2}
\oshade[white][white]{m-2-3}{m-2-3}
\oshade[white][white]{m-3-1}{m-3-1}
\oshade[white][white]{m-3-2}{m-3-2}

\fillpattern[north east lines][pear]{m-3-1}{m-3-1}
\fillpattern[north east lines][pear]{m-1-2}{m-1-2}
\fillpattern[north east lines][pear]{m-2-3}{m-2-3}
\end{pgfonlayer}
\end{tikzpicture}
\quad
\begin{tikzpicture}
\matrix (m)[matrix of nodes, style={nodes={color = white, rectangle,draw,minimum width=10em, text width = 8.5em, line width=0pt}}, minimum height=2em, row sep=-\pgflinewidth, column sep=-\pgflinewidth,
    column 1/.style={anchor=base west}]
{
\color{black}$\succ'''(w_{1}): f_{2}, f_{3}, f_{1}$	\\ 
\color{black}$\succ'''(w_{2}): f_{1}, f_{2}, f_{3}$   \\ 
\color{black}$\succ'''(w_{3}): f_{2}$	\\
};
\end{tikzpicture}

\end{center}
    \caption{Economy with four equilibrium outcomes in which firms report truthfully}
    \label{fig_motivating}
\end{figure}

In both states, there is a unique complete-information stable matching highlighted in bold. Despite information being incomplete, agents have no uncertainty regarding the (complete-information) stable matching. The only difference between the two states appears in $f_{2}$'s preferences---she ranks $w_{1}$ and $w_{3}$ differently across the states. In particular, if all participants knew the state to be $\theta$, the resulting unique stable matching and unique equilibrium outcome of the DA mechanism would be the matching $\mu ,$ $\mu (f_{i})=w_{i},$ $i \in \{1,2,3\}$.

As stated, with firms truthfully reporting, the complete-information stable matching $\mu$ in each state is an equilibrium outcome of this game. We now show that it is not unique: there are three other unstable equilibrium outcomes, all supported by weakly undominated strategies.  The resulting matchings in each state are depicted in Figure~\ref{fig_motivating}.

Consider the $\lambda_1$ equilibrium outcome. In the corresponding equilibrium, $w_{2}$ reports his preferences truthfully, while $w_{1}$ and $w_{3} $ \textit{drop} $f_{1}$ from their preference list, declaring her unacceptable. In particular, truncation strategies are not best responses for both $w_{1}$ and $w_{3}$, unlike the complete-information setting, when there are always truncation best responses (see \citealp*{roth_incentives_1991}).

This equilibrium is appealing for workers---they uniformly prefer it to the one generating the (complete-information) stable outcome in each state. Thus, workers prefer being on the receiving side of DA with this equilibrium selected, see our discussion in Section \ref{Game}. 

What allows for this equilibrium to emerge? In state $1$, were $w_{3}$ and $f_{3}$ absent, the resulting sub-market would have two stable matchings: one matching $w_{i}$ with $f_{i}$; the other, more preferable to the workers, matching $w_{i}$ with $f_{j}$, $i=1,2, j=3-i$. This multiplicity is generated by a \textit{cycle} in the sub-market: $f_{1}$ prefers $w_{1}$, who prefers $f_2$, who prefers $w_{2}$, who prefers $f_{1}$. Worker $w_{1}$ would then benefit from ``truncating'' his preferences and reporting $f_{1}$ as unacceptable. Of course, $w_{3}$ is not absent and could attempt blocking such a matching with $f_{1}$. Why does he drop his claim for $f_{1}$ as well? Consider state $2$. In the sub-market without $w_{1}$ and $f_{3}$, there is again a preference cycle yielding two stable matchings. Worker $w_{3}$ ``truncating'' and reporting $f_{1}$ as unacceptable then guarantees the workers' preferred stable matching. Of course, $w_{1}$ is present and could block this matching with $f_{1}$, but state $1$ ensures that he does not.

Such inter-linked cycles may be generated through the underlying preferences, or through the reported preference profiles. For example, in the $\lambda_{2}$ equilibrium, under the reported preferences, there is a cycle involving all firms and workers in state $1$, although there is no such cycle in the true preferences. Furthermore, the $\lambda_{2}$ and $\lambda_{3}$ equilibrium outcomes are supported only by profiles that entail \textit{permuting} firms in reported preferences. Importantly, the construction of these equilibria is not knife-edge: a small perturbation in match payoffs would not alter their description.

With complete information, when firms use their weakly dominant truth-telling strategy, the set of equilibrium outcomes of the firm-proposing DA coincides with the set of stable matchings and therefore inherits its lattice structure. As this example illustrates, with incomplete information, side-optimality and the existence of ``extremal'' equilibria break down. Indeed, $f_{1}$ and $w_{3}$ prefer the outcome $\lambda_2$ to $\lambda_1$, while $f_{3}$ and $w_{2}$ prefer $\lambda_1$ to $\lambda_2$. 

Furthermore, in this example, the set of unmatched individuals varies across equilibria: $\lambda_3$ is the only equilibrium in which worker $w_3$ is unmatched in state $1$. This stands in contrast to the complete-information setting, where the Rural Hospital Theorem \citep{mcvitie_stable_1970} implies that the set of unmatched individuals is constant across stable matchings, and therefore DA equilibrium outcomes.  

The equilibria all induce unique stable matchings for \textit{reported} preferences in both states. Nonetheless, while corresponding cores are small, reports are not truthful.

The economy in this example does not satisfy the SPC* condition. However, without $f_2$, the resulting ``sub-economy'' effectively entails complete information, satisfies the SPC, and yields a unique equilibrium outcome. As we show, this fragility to the infusion of incomplete information through a small number of participants---here, one---is quite general.\demo 

\end{example}

\section{Stable Equilibrium Outcomes}

In this section, we identify an important class of economies in which incomplete information does not hinder stability and results from the complete-information setting carry over. Nonetheless, in the next section, we show these conclusions are arguably fragile.\footnote{For presentation simplicity, we maintain our assumption of one-sided incomplete information throughout. In the Online Appendix, however, we illustrate that this section's results hold more generally, even when information is incomplete for both market sides.}

\begin{proposition}
\label{prop_uniqueness}
Consider any economy $\mathcal{E}$ such that either 
\begin{inparaenum}[(1)]
\item \label{prop_uniqueness_part1}
firms have (possibly state-specific) assortative preferences or 
\item \label{prop_uniqueness_part2}
the SPC* is satisfied.
\end{inparaenum}
Then, it admits a unique BNE outcome corresponding to the unique (complete-information) stable outcome in each state.
\end{proposition}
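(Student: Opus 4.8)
The plan is to fix an arbitrary BNE in which firms report truthfully and workers play weakly undominated strategies, and to show that its outcome coincides, state by state, with the unique stable matching $\mu(\theta)$. Since the baseline existence result above already gives a truthful BNE implementing $\mu(\theta)$, this establishes uniqueness of the outcome. I would argue by strong induction on the SPC order $i$, proving the hypothesis $H(i)$: \emph{in the fixed BNE, for every state $\theta'$ and every order $j<i$, the order-$j$ pair $(f_{j|\theta'},w_{j|\theta'})$ is matched to itself}. The induction must run simultaneously across all states, since each worker commits to a single report used in every state; that coupling is precisely what the SPC$^*$ condition is designed to control.

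The engine throughout is the top-top structure. For the base case, $(f_{1|\theta},w_{1|\theta})$ is top-top in the whole market, so $f_{1|\theta}$ is $w_{1|\theta}$'s favorite firm; any weakly undominated report lists it first, and since firms are truthful, $f_{1|\theta}$ proposes to $w_{1|\theta}$ and is never rejected, pinning the order-$1$ pair in every state. For the step I assume $H(i)$, fix a state $\theta$, and work in the submarket $\{f_{j|\theta},w_{j|\theta}\}_{j\ge i}$, in which $(f_{i|\theta},w_{i|\theta})$ is top-top. The first sub-claim is a purely structural \emph{upper bound}: every firm that $w_{i|\theta}$ strictly prefers to $f_{i|\theta}$ lies outside this submarket, hence has order below $i$ in state $\theta$ and, by $H(i)$, is already committed to its own lower-order worker; symmetrically, every worker $f_{i|\theta}$ prefers to $w_{i|\theta}$ is held by a lower-order firm that she is rejected by. Thus neither $w_{i|\theta}$ nor $f_{i|\theta}$ can be matched strictly above its stable partner in state $\theta$.

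The second sub-claim rules out matching strictly below the stable partner, and here the hypotheses do their work. If the order-$i$ pair is not matched to itself, the upper bound forces both sides strictly below $\mu(\theta)$, so $(f_{i|\theta},w_{i|\theta})$ blocks the outcome under the \emph{true} preferences; as the firm-proposing DA outcome is stable for the reported preferences and $f_{i|\theta}$ is truthful, $w_{i|\theta}$ must have ranked $f_{i|\theta}$ below his realized match. I would then contradict optimality by exhibiting a deviation that raises $f_{i|\theta}$ to just above that realized match. Under the SPC$^*$ branch, every firm $w_{i|\theta}$ prefers to $f_{i|\theta}$ has order below $i$ in \emph{every} state, so $H(i)$ pins all of them everywhere and they are unreachable for $w_{i|\theta}$ in every state; thus the deviation strictly improves state $\theta$ while forgoing nothing attainable anywhere else. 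The assortative-firms branch is handled by the same induction run on the (state-specific) common worker ranking, for which the firm-proposing DA is a serial dictatorship: a firm a worker prefers to his stable partner is, in each state, held by a worker the firm ranks strictly above him and hence never released, giving the analogous unreachability.

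The step I expect to be the main obstacle is making this ``secure-without-loss'' deviation rigorous, i.e., a monotone-comparative-statics argument for DA under a one-worker change of report, guaranteeing that raising $f_{i|\theta}$ strictly helps in state $\theta$ yet lowers $w_{i|\theta}$'s match in no other state. The delicate case is a state $\theta'$ in which $w_{i|\theta}$ is himself a high-priority (low-order) agent holding a valuable firm; raising $f_{i|\theta}$ must not dislodge it. This is exactly where SPC$^*$ is indispensable: applied at $\theta'$ to $w_{i|\theta}$'s order there, it forces $f_{i|\theta}$ to be weakly worse than $w_{i|\theta}$'s stable partner in $\theta'$ whenever that order is at most $i$, so inserting $f_{i|\theta}$ just above the (inferior) realized match in $\theta$ keeps it below the firm $w_{i|\theta}$ already holds in $\theta'$ and causes no displacement. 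I would also record the bookkeeping that the lower-order pins guaranteed by $H(i)$ survive the deviation, since only $w_{i|\theta}$'s report changes while the peeling that locks those pairs depends solely on the other, unchanged, undominated reports.
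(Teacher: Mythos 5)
Your skeleton is the paper's: the same simultaneous induction on the SPC order across all states, the same base case via the top-top pair, the same observation that a failed order-$i$ pair means $w_{i|\theta}$ reported his (truly inferior) realized match above $f_{i|\theta}$, and the same use of the SPC$^*$ to force every firm in $w_{i|\theta}$'s strict upper contour set at $f_{i|\theta}$ to have order below $i$ in \emph{every} state. The gap sits exactly at the step you yourself flag as the main obstacle, and neither of your two proposed closures works as stated. First, your deviation (insert $f_{i|\theta}$ just above the realized match in state $\theta$) is controlled only in terms of \emph{true} preferences: you argue the inserted firm ``keeps below the firm already held in $\theta'$,'' but what governs DA is where $f_{i|\theta}$ lands in the \emph{reported} list relative to $\lambda(\theta')(w_{i|\theta})$. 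Since the equilibrium report $Q$ need not be truthful, $\lambda(\theta')(w_{i|\theta})$ may well be reported below $\lambda(\theta)(w_{i|\theta})$, in which case your insertion places $f_{i|\theta}$ above it and displacement in state $\theta'$ is not excluded. The paper instead deviates by moving the entire true upper contour set $\{f: f\succeq_{w_{i|\theta}} f_{i|\theta}\}$ to the top, ranked truthfully, keeping $Q$'s order on the remainder. Second, your bookkeeping claim that the lower-order pins ``survive the deviation'' because only one report changed is not justified: $H(i)$ is a statement about the \emph{equilibrium} outcome $\lambda$, proved using the best-response property of the whole profile; it says nothing about the post-deviation outcome $\lambda'$, and a single changed report can rearrange the DA outcome for everyone.

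What actually closes the step in the paper, and is absent from your proposal, is a descent argument that never applies the induction hypothesis to $\lambda'$. Suppose the deviation leaves $w_{i|\theta}=w_{l|\theta'}$ with something reportedly worse than $f_{l|\theta'}=\lambda(\theta')(w_{i|\theta})$. Stability of $\lambda'(\theta')$ for reported preferences forces $f_{l|\theta'}$ (truthful) to hold a worker she prefers to $w_{l|\theta'}$, hence one of order $l_1<l$. The induction hypothesis applied to $\lambda(\theta')$ gives $\lambda(\theta')(w_{l_1|\theta'})=f_{l_1|\theta'}$, and stability of $\lambda(\theta')$ for reported preferences then shows $w_{l_1|\theta'}$ \emph{reportedly} prefers $f_{l_1|\theta'}$ to $f_{l|\theta'}$; so under $\lambda'(\theta')$ the pair $(w_{l_1|\theta'},f_{l_1|\theta'})$ would block unless $f_{l_1|\theta'}$ holds a worker of order $l_2<l_1$, and so on down to order $1$, where $f_{1|\theta'}$'s global favorite yields a reported blocking pair for $\lambda'(\theta')$ --- a contradiction. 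Your assortative-firms branch (nested proposal batches, one decisive choice per worker) matches the paper's and is fine.
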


Intuitively, consider first the case in which firms have assortative preferences. In each state, firms share the same rankings over workers that they report truthfully. Then, for any state, in the course of the firm-proposing DA, all active firms apply to the same worker. Through the DA steps, the sets of active firms are nested. Therefore, regardless of workers' reports, each worker makes only one choice in the course of the firm-proposing DA and these choices are decisive. In particular, misreporting cannot be beneficial to workers. The only potential impact of a worker misreporting is that he may forgo a firm he would otherwise match to and, consequently, be matched to an inferior firm, or no firm at all. 

The intuition underlying the proposition's second part is more subtle. For any state $\theta$, the corresponding top-top match pair ($w_{1|\theta}$, $f_{1|\theta}$) must be matched under any BNE (Lemma~\ref{lem_top_match} in the Appendix). If not, since firm $f_{1|\theta}$ reports truthfully, worker $w_{1|\theta}$'s top reported firm is not his most preferred firm $f_{1|\theta}$, so that $w_{1|\theta}$ could profitably deviate by shifting $f_{1|\theta}$ to the top of his reported ranking. 

Hence, if there exists an unstable BNE outcome, we can find a state together with an unmatched ``sequential top-top pair'' in this state of order at least two. Let ($w_{i|\theta}$, $f_{i|\theta}$) be an unmatched pair of the smallest possible order $i\geq 2$ among all such pairs and $\theta$ be the corresponding state.\footnote{If there are multiple such pairs, we can pick any.} Again, because firm $f_{i|\theta}$ reports truthfully, worker $w_{i|\theta}$ must be reporting his less desirable equilibrium partner as preferable to $f_{i|\theta}$. We can then construct a deviation strategy such that 
$w_{i|\theta}$ reports truthfully his preferences over firms $\{f:f\succeq_{w_{i|\theta}} f_{i|\theta}\}$ and
maintains the equilibrium ranking for all other firms. 

In any state $\theta$ in which worker $w_{i|\theta}$ is matched with firm $f \preceq_{w_{i|\theta}} f_{i|\theta}$ under the BNE, the deviation ensures that the worker either receives $f$ or a preferable firm from $\{f:f\succeq_{w_{i|\theta}} f_{i|\theta}\}$. 

Suppose that, for some state $\theta' \neq \theta$, the worker is supposed to match with firm $f \succ_{w_{i|\theta}} f_{i|\theta}$ in equilibrium. Both the SPC* and the minimality of $i$ are crucial for establishing that the proposed deviation cannot hurt worker $w_{i|\theta}$. 
The SPC* implies that in state $\theta'$, the ``demanded'' firm $f$ belongs to a sequential pair of order at most $(i-1)$. By the minimality of $i$, in state $\theta'$, the corresponding sequential pair $(w_{i|\theta}, f)$ of order at most $(i-1)$ must be matched. The proposed deviation must then deliver a firm weakly preferable to $f$ under the true preferences. Otherwise, the recursive use of minimality of $i$ and stability for reported preferences yields a contradiction. Intuitively, since $i$ is minimal, other workers' reports do not impede on worker $w_{i|\theta}$ getting his stable match.

In general, equilibria yielding unstable outcomes may exploit cycles in the \textit{true preferences}, as in the motivating example's first equilibrium, or generate cycles through equilibrium reports involving permutations, as in the motivating example's second equilibrium.\footnote{See also Examples 7 and 8 in the Online Appendix, which illustrate unstable equilibrium outcomes in two-state economies in which one state exhibits no (true) preference cycles.}

Do we restore stability if we prohibit permutations in reported preferences and focus on economies with no cycles in true preferences? The result below answers affirmatively.\footnote{A dropping strategy for a worker declares some firms unacceptable, but ranks others truthfully.}

\begin{proposition}
\label{prop_uniqueness_dropping}
Restrict workers' strategy sets to dropping strategies and consider any economy $\mathcal{E}$ such that either 
\begin{inparaenum}[(1)]
\item \label{prop_uniqueness_dropping_part1}
all markets in its support do not have preference cycles\footnote{This condition is formally called the \textit{aligned preferences condition} \citep*{ferdowsian_decentralized_2021}. It is stronger than the SPC, weaker than requiring assortative preferences for either side, and different from the SPC*. Examples 3-5 in the Online Appendix illustrate connections between these conditions.} or
\item \label{prop_uniqueness_dropping_part2}
the SPC* is satisfied. 
\end{inparaenum}
Then, it admits a unique BNE outcome corresponding to the unique (complete-information) stable outcome in each state.
\end{proposition}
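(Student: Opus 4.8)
The plan is to treat the two cases separately: I would reduce part~(2) to the argument already behind \Cref{prop_uniqueness}, and prove part~(1) by an induction on the sequential order that parallels, but does not follow from, that proposition. For part~(2), the key remark is that the deviation used there to rule out unstable equilibria --- have $w_{i|\theta}$ report truthfully over $\{f:f\succeq_{w_{i|\theta}}f_{i|\theta}\}$ while retaining the incumbent ranking below $f_{i|\theta}$ --- is \emph{itself} a dropping strategy whenever the incumbent profile consists of dropping strategies: it keeps a subset of firms and ranks them in their true order. Since this deviation lies in the restricted strategy set, the whole argument of \Cref{prop_uniqueness} goes through verbatim in the restricted game and no new unstable equilibria can appear. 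The same is true of the deviation behind \Cref{lem_top_match} (moving $f_{1|\theta}$ to the top is just re-adding it), so top-top pairs remain matched in every state of any dropping-strategy BNE. This settles part~(2).

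The substance is part~(1). The preliminary observation I would isolate first is that dropping cannot create preference cycles: firms report truthfully, and a worker's reported ranking of any two firms he keeps agrees with his true ranking, so any cycle in a reported profile would already be a cycle in the true preferences. Hence, under \emph{any} dropping profile, every reported market is cycle-free and therefore (being stronger than the SPC) has a unique stable matching, which the firm-proposing DA implements. Combined with \Cref{lem_top_match}, I then argue by contradiction: if some dropping-strategy BNE is unstable, then --- since order-one pairs are always matched and matching all sequential pairs in every state reproduces the stable matching --- there is an unmatched sequential pair of order at least two. I pick such a pair $(w_{i|\theta},f_{i|\theta})$ of \emph{minimal} order $i$ across all states. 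Minimality forces every pair of order $<i$ to be matched to its true partner in every state, so in state $\theta$ the worker $w_{i|\theta}$ is matched strictly below $f_{i|\theta}$ (his best residual firm) or is unmatched.

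I would then let $w_{i|\theta}$ deviate to the dropping strategy that re-adds every firm in $\{f:f\succeq_{w_{i|\theta}}f_{i|\theta}\}$ to his incumbent list, leaving it otherwise unchanged. Because this only lengthens his reported acceptable set, a standard monotonicity of the firm-proposing DA --- a worker is weakly better off when his list grows, which is clean here since every reported market has a unique stable matching --- guarantees he is not hurt in any state. It therefore suffices to exhibit a \emph{strict} gain in state $\theta$, namely that the deviation secures a firm weakly preferred to $f_{i|\theta}$.

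This last claim is the crux and the step I expect to be the main obstacle. Let $\nu$ be the unique stable matching of the cycle-free \emph{deviated} reported market in state $\theta$. The worker keeps $f_{i|\theta}$, so if $\nu(w_{i|\theta})\prec_{w_{i|\theta}}f_{i|\theta}$, then stability forces $f_{i|\theta}$ to be matched in $\nu$ to a worker it prefers to $w_{i|\theta}$; since $(f_{i|\theta},w_{i|\theta})$ is top-top among orders $\geq i$, that worker must be $w_{j_1|\theta}$ with $j_1<i$. Here the recursion begins: $w_{j_1|\theta}$ still keeps $f_{j_1|\theta}$ (he is matched to it in the incumbent profile by minimality, and only $w_{i|\theta}$ deviated) and prefers $f_{j_1|\theta}$ to his $\nu$-partner $f_{i|\theta}$, so stability forces $f_{j_1|\theta}$ to be matched to some $w_{j_2|\theta}$ with $j_2<j_1$, and so forth. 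The orders strictly decrease, so the chain must reach order one, where $w_{1|\theta}$ keeps his globally most preferred partner $f_{1|\theta}$ while $f_{1|\theta}$ would have to be matched to a worker it prefers to its global favorite $w_{1|\theta}$ --- impossible. Thus $(w_{1|\theta},f_{1|\theta})$ blocks $\nu$, a contradiction, which yields the strict gain and hence a profitable deviation. The two delicate points to nail down are that dropping genuinely preserves cycle-freeness (so $\nu$ is well defined and stability can be invoked throughout) and that minimality keeps each lower-order pair's firm in the relevant worker's list, which is precisely what drives the recursion to the bottom.
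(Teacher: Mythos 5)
The paper relegates its own proof of this proposition to the Online Appendix, so I can only judge your argument against the closely related proof of \Cref{prop_uniqueness}. Your treatment of part~(2) is correct and is exactly the right reduction: the deviation $Q'$ used in the proof of \Cref{prop_uniqueness} (and the deviation in \Cref{lem_top_match}) is itself a dropping strategy whenever the incumbent report is one, so the restriction of the strategy space cannot resurrect the unstable profiles that \Cref{prop_uniqueness} kills. Your recursion establishing the \emph{strict} gain in state $\theta$ is also sound and mirrors the paper's blocking-chain argument.

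The gap is in part~(1), at the step you dismiss as ``standard monotonicity'': it is not true in general that a worker on the receiving side of firm-proposing DA is weakly better off when his reported list grows. The opposite comparative static is precisely what makes truncation/dropping profitable in the first place---shrinking one's list triggers rejection chains that can return a strictly better proposal (the textbook $2\times 2$ cycle example already shows a worker strictly gaining by dropping a firm). Your parenthetical---that both reported markets have unique stable matchings---does not repair this, and there is internal evidence in the paper that it cannot be repaired cheaply: your strict-gain recursion in state $\theta$ uses only the SPC structure plus minimality of $i$, so if the monotonicity held as stated, the SPC alone (with dropping strategies) would imply uniqueness, contradicting the paper's footnote that Example~6 of the Online Appendix rules this out. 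The cycle-freeness hypothesis must therefore be doing its real work exactly in the ``no harm in states $\theta'\neq\theta$'' step, which in the paper's proof of \Cref{prop_uniqueness} is handled not by a generic monotonicity but by a state-$\theta'$ recursive chain exploiting SPC* and the induction hypothesis. You need an analogous state-by-state argument here---showing that in any other state the deviator's equilibrium partner either is already weakly below $f_{i|\theta}$ (so re-adding firms above it is harmless by a blocking-chain argument) or belongs to a lower-order matched pair whose loss would propagate down to a contradiction with cycle-freeness---rather than an appeal to list-extension monotonicity. Your secondary worry about dropping preserving cycle-freeness is the easier point: cycles through a dropped (hence mutually unacceptable) pair are irrelevant for stability, and all comparisons among kept firms are truthful, so the reported markets do retain unique stable matchings; but that alone does not deliver the comparative static you need.
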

The proof of this result is in the Online Appendix.\footnote{Example 6 there illustrates that one cannot relax either condition to the SPC in both of this section's propositions.}

Common values in matching markets naturally translate to assortative preferences, where all agents on one market side (or both sides) rank agents on the other market side identically. Propositions \ref{prop_uniqueness} and \ref{prop_uniqueness_dropping} guarantee that in such markets, when common values are exhibited by either side of the market, there is a unique equilibrium outcome. We next show that the addition of ``minimal'' private values, through a small fraction of added participants, can alter market outcomes dramatically.

\section{Fragility Results}

We now show that various properties displayed by equilibrium outcomes of complete-information economies are fragile to the infusion of ``minute" uncertainty about others' preferences.

We first consider a complete-information economy satisfying the SPC that, from our results thus far, is potentially more resistant to the infusion of uncertainty. We refer to it as the \textit{original economy}. For simplicity, we assume the original economy is balanced, with equal numbers of firms and workers, $m=n$.\footnote{Somewhat more involved arguments can be used to extend our analysis to imbalanced economies.}
        
We augment the original economy by adding one state, and several workers and firms, with their preferences and others' preferences over them in each state. Each market in the augmented economy entails a unique stable matching and occurs with positive probability. Importantly, the preferences of workers and firms present in the original economy remain unchanged with respect to one another in both states. Furthermore, the preferences of workers in the augmented economy coincide across the two states, so that no worker is informed of the state.

Formally, an \textit{augmented economy} of the original market $\mathcal{M}=(F,W,U)$ \textit{with firms $F'$ and workers $W'$} is an economy $\mathcal{E}=(F\cup F',W \cup W',\{U(\theta), U(\theta')\},\{\theta,\theta'\},\Psi)$ such that 
\begin{inparaenum}[(1)]
\item
both $U(\theta)$ and $U(\theta')$ coincide with $U$ when restricted to $F \times W$;
\item
each $w \in  W'$ has identical match utilities in both states; and 
\item
$\Psi$ is non-degenerate and both $U(\theta)$ and $U(\theta')$ are each associated with a unique stable matching.
\end{inparaenum}

\begin{proposition}
\label{prop_fragility}
Consider any balanced market 
$\mathcal{M}$ with at least two agents on each side satisfying the SPC. Then, we can construct an augmented economy $\mathcal{E}$ of $\mathcal{M}$ with one additional firm and one additional worker such that:
\begin{enumerate}
    \item there is a BNE that supports unstable outcomes in both states;
    \item any BNE outcome can be supported by weakly undominated strategies that induce a unique stable matching (for reported preferences) in each state;
    \item the set of matched workers varies across equilibrium outcomes. Furthermore, workers disagree on which is preferred.
\end{enumerate}
\end{proposition}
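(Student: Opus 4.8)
The plan is to prove \Cref{prop_fragility} constructively, localizing a two–state gadget at the top of the SPC ordering and leaving the rest of the market inert, generalizing the mechanism of the motivating example in \Cref{exmp_motivating}. Fix an SPC ordering with unique stable matching $\mu(f_i)=w_i$; the hypothesis of at least two agents on each side lets me use the top two pairs $(f_1,w_1)$ and $(f_2,w_2)$. I add one firm $f'$, one worker $w'$, and a second state, assigning roles as follows: $f'$ is a common ``target'' that the competing workers rank above $f_1$; $w'$ plays the pivotal role; $f_1$ is the firm that workers will drop; and $f_2$ (or another low original firm) absorbs whichever competing worker is displaced. All worker preferences, including $w'$, are taken state-independent, so that no worker learns the state, and the sole cross-state difference is in $f'$'s preferences. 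Crucially, every relation I must set either is one of the SPC-guaranteed top-top relations of $(f_1,w_1)$ or involves a new agent, and the latter are entirely unconstrained; this is what makes the construction work for an \emph{arbitrary} SPC market, including assortative ones that provide none of the ``interloper'' relations among original agents. The pairs $(f_i,w_i)_{i\ge 3}$ are insulated by ranking the new agents extremally, so the SPC cascade keeps them matched to one another throughout.

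The first verification is that each state has a unique stable matching, as the definition of an augmented economy requires. I would tune the free (new-agent) relations so that in each state the ``worker-optimal'' resolution of the embedded cycle is destroyed by a blocking pair running through $f'$ or $w'$, leaving $\mu$ extended by $(f',w')$ in one state and a single top-rotation of it in the other as the unique stable matching; since only six agents are non-inert, this is a finite check with the inert pairs handled by the cascade. For Part~1, the candidate unstable BNE has the workers competing for $f'$ drop $f_1$ (listing $f'$ first) while everyone else reports truthfully. Tracing the firm-proposing DA shows that the dropped claims push $f_1$ onto the pivot and free $f'$ for the ``winning'' worker, sending the ``losing'' worker to the catcher; in each state the loser together with $f_1$ forms a blocking pair for the \emph{true} preferences, so both outcomes are unstable.

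The core of Part~1 is deviation-proofness. Because worker strategies are state-independent, the only tempting deviation---re-including $f_1$---must be evaluated across both states at once: it lets the deviator seize $f_1$ in the state where he loses the competition for $f'$, but costs him the target $f'$ in the state where he would have won it. I would choose the cardinal utilities so that the across-state loss strictly exceeds the gain; this is an open inequality, which is why the construction is robust rather than knife-edge. Retracing the DA under each candidate deviation and verifying this inequality in every state is the technical heart of the argument.

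The main obstacle is the tension between two requirements that must hold simultaneously: every state must have a \emph{singleton} core (a unique stable matching, so that no cycle survives in any full-market state), and yet the economy must admit an \emph{unstable} BNE, which is inherently cyclic behavior. The two-state inter-linkage is what resolves this---each cycle is present only ``virtually,'' realized through a state-independent dropping strategy that exploits the worker's inability to condition on the state---but engineering it so that the displacement chains triggered by dropping $f_1$ never create a second stable matching in either state, while keeping the deviation inequality satisfiable, is the delicate part. Finally, for Part~2 I would re-implement any BNE outcome by having each worker report his true preferences truncated at his assigned partner: such dropping strategies are weakly undominated (each still lists a most-preferred acceptable firm first) and render the stable matching for reported preferences unique in every state. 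For Part~3 I would exhibit a second equilibrium in which one competing worker over-truncates, reporting only $f'$, and is therefore left unmatched in the state he loses; this changes the set of matched workers across equilibria, and a direct comparison shows that two workers rank the two equilibrium outcomes in opposite order, completing the proof.
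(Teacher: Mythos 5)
Your construction places the gadget at the \emph{top} of the SPC ordering, on the pairs $(f_1,w_1)$ and $(f_2,w_2)$; the paper instead localizes it at the pairs of \emph{highest} order, $(f_{n-1},w_{n-1})$ and $(f_n,w_n)$, and this difference is fatal rather than cosmetic. The problem is your insulation claim. For the intended blocking pair $(f_1,\text{loser})$ to exist, $f_1$ must end the firm-proposing DA matched to someone she ranks \emph{below} the losing competitor; but $f_1$ proposes in order of her true preferences, and the SPC pins down only that $w_1$ is her first choice --- her ranking of $w_2$ against $w_3,\dots,w_n$ is fixed by $\mathcal{M}$ and arbitrary. If you insert $w'$ immediately after $w_1$ in $f_1$'s list, she is absorbed by $w'$ without a cascade, but then she prefers $w'$ to the loser and the outcome you produce is in fact stable, killing Part~1. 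If instead you insert $w'$ below the loser, then upon rejection by $w_1$ and $w_2$, firm $f_1$ proposes to the original workers she ranks in between, and ranking the \emph{new} agents extremally does nothing to stop $w_3,\dots,w_n$ from accepting her: SPC does not prevent $w_j$ from preferring $f_1$ to $f_j$ (in an assortative market every $w_j$ does), so a displacement chain runs through the whole market and your claim that $(f_i,w_i)_{i\ge3}$ stay put is false. For the same reason your assertion that ``every relation I must set either is one of the SPC-guaranteed top-top relations of $(f_1,w_1)$ or involves a new agent'' is incorrect: you also need, e.g., $w_2$ to rank $f_1$ above the catcher and $f_1$ to rank the loser just above $w'$, and these are relations among original agents that $\mathcal{M}$ fixes and SPC does not supply. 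The bottom placement is what makes the argument go through for an \emph{arbitrary} SPC market: for every $i\le n-2$, SPC guarantees $w_i$ prefers $f_i$ to both $f_{n-1}$ and $f_n$ and $f_i$ prefers $w_i$ to both $w_{n-1}$ and $w_n$, so the displaced bottom agents are rejected by the entire upper market for free and the gadget is genuinely quarantined (with only one residual original relation, $w_n$'s ranking of $f_{n-1}$ versus $f_n$, requiring a case split, which the paper handles).

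Part~2 of your argument is also wrong as stated. Re-implementing a BNE outcome by having each worker ``truncate at his assigned partner'' is ill-posed because the assignment is state-dependent while a worker's report cannot be, and, more importantly, the unstable outcomes here require a worker to exclude a firm ($f_1$ in your labeling) that lies \emph{strictly between} his equilibrium assignments in his true ranking --- no truncation can drop a firm while keeping a less-preferred firm acceptable. This is precisely the paper's point that truncation strategies fail to be best responses in the motivating example; the supporting profiles must be exhibited directly as dropping strategies (which list the true favorite first and are therefore weakly undominated), and uniqueness of the stable matching for reported preferences must be checked for those profiles. Your ideas for Part~3 (over-truncating to $f'$ alone, leaving the loser unmatched in one state) and for the deviation-proofness inequality (trading $f_1$ in the losing state against $f'$ in the winning state, settled by an open cardinal condition) do match the paper's mechanism, but they only become available once the gadget is relocated to the bottom of the SPC order.
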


Certainly, one can always augment a complete-information economy satisfying our restrictions by ``appending'' three workers and three firms that view all of the original economy's participants as least desirable and are viewed by them as least desirable as well. These added firms and workers can exhibit the same preferences over one another as in our motivating example. The conclusions of the proposition would then follow for \textit{any} original complete-information economy, even absent the SPC assumption. The main point of the proposition is that instability can emerge in equilibrium even with minimal augmentation.

To illustrate our construction, the details of which appear in the Online Appendix, consider the motivating example. Workers $\{w_1,w_3\}$ and firms $\{f_1,f_3\}$ exhibit the same match utilities over one another in both states. Furthermore, the corresponding sub-market satisfies the SPC, with $(f_1,w_1)$ as the top-top pair. The motivating example can then be thought of as an augmentation of this complete-information economy with  $f_2$ and $w_2$. 

Consider now an arbitrary SPC market and suppose the pairs $(f_{n-1},w_{n-1})$ and $(f_n,w_n)$ have the highest order (i.e., they constitute the last pairs matched in the recursive top-top matches' coupling). When their preferences over one another mimic those of $\{w_1,w_3\}$ and $\{f_1,f_3\}$ in our motivating example, we can replicate that construction by adding a pair similar to $f_2$ and $w_2$. In other cases, we construct examples analogous to our motivating example that allow the generalization.\footnote{With complete information, increasing competition on one side of the market cannot improve the matches of any participant on that market side and cannot harm the matches of participants on the other market side, see \cite{roth_two-sided_1992}. In the Online Appendix, we provide an incomplete-information example in which the addition of only one worker is beneficial for some workers.} 

Our proof's construction is such that at most three workers need to misreport strategically in order to implement any BNE in the augmented economy. The proof is valid for any arbitrary probability of each state in the augmented economy. Furthermore, if worker $w_n$ exhibits (non-justified) envy towards $w_{n-1}$, i.e. $u^w_{n-1,n}>u^w_{n,n}$, we can construct the augmented economy to entail no uncertainty regarding the unique stable matching. In that case, our construction also ensures that all unstable BNE outcomes ex-ante Pareto dominate the stable one for workers. Thus, unstable BNE outcomes can be more appealing for workers on efficiency grounds and require fairly limited strategizing relative to stable outcomes.

Naturally, one may wonder about the impact of such equilibrium multiplicity on overall outcomes of participants. Particularly when markets are large, as in many applications of DA, whether or not a substantial fraction of participants is affected would have important implications. We now show that, indeed, the addition of a small number of market participants can alter outcomes dramatically.

We start with an example of an $n \times n$ market satisfying the SPC, for which the addition of one firm and one worker allows for a BNE that substantially improves \textit{all} workers' outcomes---in each state, the average rank increase for workers is roughly $n/2$.

\begin{example}
\label{exmp_fragility_2}

Consider a market with $n$ firms $\{f_1,\ldots,f_n\}$ and $n$ workers $\{w_1,\ldots,w_n\}$, with the following preferences. For all $i \in [n]$:
\begin{align*}
f_i&: w_i \succ w_{i+1} \succ \ldots \succ w_n \succ w_{i-1} \succ w_{i-2} \succ \ldots \succ w_1;\\
w_i&: f_{i-1} \succ f_{i-2}\succ \ldots \succ  f_1 \succ f_i \succ f_{i+1} \succ \ldots \succ f_n.
\end{align*}
This market exhibits no preference cycles and has a unique stable matching that matches $w_i$ with $f_i$ for all $i \in [n]$.

Consider now an augmentation with firm $f$ and worker $w$ and corresponding to two equally likely states, $\Theta=\{1,2\}$. 

Choose firm $f$'s preferences as state-independent:
\begin{align*}
    f:w \succ w_n \succ w_1 \succ w_2,\,w_3,\,\ldots,\,w_{n-1}. 
\end{align*}
Extend firm $f_{n-1}$'s preference in a state-dependent fashion. In state 1, the firm ranks $w$ between $w_{n-1}$ and $w_n$, while in state 2, she ranks $w$ between $w_n$ and $w_{n-2}$. All other firms retain state-independent preferences and rank $w$ as their least favorite.

As for workers, worker $w$ has the following preferences:
\begin{equation*}
    w:f_{n-1} \succ f \succ f_{n} \succ f_1,\,f_2,\,\ldots,\,f_{n-2}\\
\end{equation*}
and each of the original workers $w_i$ ranks $f$ right above $f_i$ (for $i>1$, below $f_1$).

Set $w$'s utility from $f_{n-1}$ to be sufficiently high so that his expected utility from matching with $f_{n-1}$ in state 1 and $f_n$ in state 2 is higher than his utility from matching with $f$ in both states. Set $w_n$'s utility from $f$ to be sufficiently close to his utility from $f_n$, so that his expected utility from matching with $f_{n-1}$ in state 1 and $f_n$ in state 2 is higher than his utility from matching with $f$ in both states. All unspecified utilities can be set arbitrarily.

In this economy, both states have the same unique stable matching $\mu$ such that, as before, $\mu(f_{i})=w_{i}$ for any $i \in [n]$, and $\mu(f)=w$.

Consider the strategy profile such that workers $w$ and $w_n$ both drop $f$, and everyone else reports truthfully. It generates unstable outcomes $\lambda(1), \lambda(2)\neq \mu$ in the respective states such that 
\begin{inparaenum}[(1)]
\item  $\lambda(1)(w_1)=\lambda(2)(w_1)=f$;
\item for any $i \neq 1,n$,  $\lambda(1)(w_i)=\lambda(2)(w_i)=f_{i-1}$;
\item $\lambda(1)(w)=\lambda(2)(w_n)=f_{n-1}$; and
\item $\lambda(2)(w)=\lambda(1)(w_n)=f_{n}$.
\end{inparaenum}
These outcomes constitute unique stable matchings for the reported preferences.

The candidate profile constitutes a BNE in weakly undominated strategies. Indeed, worker $w$ cannot get his most preferred $f_{n-1}$ in state $2$. In order to get his second-most preferred $f$ in state $2$, $w$ needs to report $f$ as acceptable. However, such a deviation precludes him from getting his most preferred $f_{n-1}$ in state $1$. By construction, it is not profitable.

Similarly, worker $w_n$ cannot get firms $\{f_j\}_{j \in [n-2]}$ in state $1$. In order to get $f$ in state $1$, $w_n$ needs to report $f$ to be more desirable than $f_n$. However, such a deviation precludes him from getting his most preferred $f_{n-1}$ in state $2$ and is thus unprofitable.

Finally, all other workers have no incentives to deviate since the generated matchings are unique stable for the reported preferences.

In the augmented economy, in either state, the average rank difference for workers is approximately $n/2$ when $\lambda(\theta)$ and $\mu$ are compared.\demo
\end{example}

We now restrict attention to particular economies, those in which preferences are assortative on both sides of the market. While these preferences are certainly a polar extreme, they allow us to generalize the example above and illustrate the substantial impacts even a ``small'' infusion of uncertainty may have.

\begin{proposition}
\label{prop_fragility_assort}
Consider any market $\mathcal{M}$ with $n$ firms and $n$ workers having assortative preferences. Let $k \leq n-2$. There is an economy $\mathcal{E}$ augmented with $k$ firms and $k$ workers such that the following holds. There is a BNE in weakly undominated dropping strategies that matches, identically in both states, $(n-k-1)$ original workers to original firms. According to the original preferences, each of these workers' matched firm has a rank lower by $k$ relative to their stable partner.
\end{proposition}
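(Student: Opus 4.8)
The plan is to generalize the motivating example's cross-state cover from a single auxiliary pair to $k$ auxiliary pairs, exploiting the assortative structure to make the induced deferred-acceptance cascade slide an entire block of workers up by exactly $k$. First I would relabel so that every firm ranks $w_1 \succ \cdots \succ w_n$ and every worker ranks $f_1 \succ \cdots \succ f_n$; the unique stable matching is then $\mu$ with $\mu(w_i)=f_i$. I would augment $\mathcal{M}$ with firms $g_1,\ldots,g_k$ and workers $v_1,\ldots,v_k$ and two states $\theta,\theta'$, leaving the original agents' mutual preferences intact (as an augmented economy requires) and making each added worker's preferences state-independent.

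The target equilibrium outcome $\lambda$ is as follows. The top $k$ original workers take the added firms, $w_j \to g_j$ for $j \le k$, which clears them out of the original market; the assortative structure then forces the firm-proposing DA to slide the next block one-to-one, $w_i \to f_{i-k}$ for $i=k+1,\ldots,n-1$. These are the $n-k-1$ original workers matched to original firms, each to a partner ranked exactly $k$ positions above $\mu(w_i)=f_i$ in the original preferences, identically in both states. The bottom $k+1$ firms $f_{n-k},\ldots,f_n$ and the $k+1$ workers $w_n,v_1,\ldots,v_k$ are matched by a boundary gadget whose resolution is state-dependent. To make $\lambda$ arise I would have $w_1,\ldots,w_{n-1}$ report truthfully (with $w_j$'s genuine top being $g_j$ for $j\le k$), while each added worker $v_j$ and the bottom worker $w_n$ play dropping strategies: $v_j$ declares $g_j$ unacceptable, so $g_j$---whose genuine top is $v_j$---is rejected and proposes down to $w_j$, triggering the cascade.

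I would then verify three things in order. First, that the firm-proposing DA on the reported lists yields exactly $\lambda$ and that $\lambda$ is the unique stable matching for the reported preferences in each state; uniqueness is what pins the outcome down and certifies that the truthful workers, and the happily matched $w_1,\ldots,w_k$, have no profitable deviation. Second, incentive compatibility of the droppers, which is the crux: each $v_j$ (and $w_n$) is induced to drop a ``safe'' firm in order to chase a strictly better firm attainable only in one of the two states, so the states must be designed---by letting a few firms permute their rankings of the added workers across $\theta$ and $\theta'$, exactly as $f_2$ permutes $w_1,w_3$ in the motivating example---so that the expected gain from chasing dominates the loss in the covering state. As there, the magnitudes of the match utilities are free parameters I would tune to make each gamble strictly optimal in expectation. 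Third, instability for the true preferences: in $\lambda$ each $g_j$ is matched to $w_j$ though it genuinely prefers $v_j$, and in the state where $v_j$'s chase fails it receives a firm worse than $g_j$, so $(g_j,v_j)$ blocks; since each dropping strategy still lists the worker's top firm, the profile is weakly undominated.

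The main obstacle is the second step carried out at scale: generalizing the single small cover of the motivating example to a $(k+1)\times(k+1)$ boundary gadget in which all $k$ added workers (together with $w_n$) drop simultaneously, each strictly gaining in expectation, while the reported market retains a unique stable matching in each state and the cascade slides precisely the block $w_{k+1},\ldots,w_{n-1}$ by exactly $k$. Arranging the chased firms across the two states so that every dropper is covered---none made worse off in both states---without introducing a second reported blocking pair or a second stable matching is the delicate combinatorial heart of the construction; once it is in place, the cascade computation and the blocking-pair and undominatedness checks are routine.
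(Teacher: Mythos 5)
Your target outcome is exactly the paper's --- $w_j$ matched to the $j$-th added firm for $j\le k$, the block $w_{k+1},\ldots,w_{n-1}$ sliding up to $f_1,\ldots,f_{n-k-1}$ for a rank gain of $k$, and a state-dependent gadget on the bottom $k+1$ firms --- but the proof has a genuine gap precisely where you flag the ``delicate combinatorial heart'': you never construct the gadget, and the route you commit to makes it needlessly hard. You require \emph{all} $k$ added workers (plus $w_n$) to simultaneously play cross-state gambles, each dropping its ``safe'' firm $g_j$ to chase a prize available in only one of two states. With only two states and only $k+1$ bottom firms to serve as prizes and penalties, you would have to design $k+1$ interlocking lotteries that are each individually rational in expectation while the reported profile retains a unique stable matching in each state; nothing in the proposal shows this is feasible, and it is exactly the step on which the proposition lives or dies.

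The paper's construction avoids this entirely: only \emph{one} added worker, $\mathbb{w}_1$, together with $w_n$, gambles --- they reuse the two-agent lottery of Example 2 verbatim on the triple $f_{n-1},\mathbb{f}_1,f_n$, with only $f_{n-1}$'s ranking of $\mathbb{w}_1$ varying across states. The remaining added workers $\mathbb{w}_2,\ldots,\mathbb{w}_k$ report \emph{truthfully}: each genuinely prefers $f_{n-i}$ to her ``own'' firm $\mathbb{f}_i$ and receives it in equilibrium, so their incentives are trivial. The diversion of $f_{n-k},\ldots,f_{n-2}$ to these added workers is achieved not by their own strategic behavior but by an auxiliary dropping strategy of the original worker $w_{n-1}$ (who drops $\{f_{n-k},\ldots,f_{n-2}\}$) --- a device absent from your proposal. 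In short, your cascade and rank computations are right, but the incentive architecture that supports the equilibrium is both missing and, as designed, harder to supply than the one the paper actually uses.
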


The proposition implies that, for any $\epsilon>0$, when the economy is large enough, we can add a small volume of agents, accounting for an $\epsilon$ fraction of the economy's population, such that the resulting economy exhibits equilibrium multiplicity. In at least one of the ``unstable'' equilibria, a large fraction of the original workers, roughly $(1-\epsilon)$ of them, experiences a substantial improvement in the ranking of their matched partner. In fact, our construction guarantees that \textit{all} workers in the augmented economy get higher expected payoffs in this equilibrium relative to the one implementing the stable outcome in each state.\footnote{As before, the proof also implies that, under the identified BNE, reported preferences are associated with a unique stable matching in each state.} 

As mentioned in Section \ref{Model}, our results rely on the uninformed workers being on the receiving side of the DA mechanism. Why wouldn't a market designer simply use the worker-proposing DA instead and guarantee stable outcomes? Importantly, if the designer uses the worker-proposing DA, a mirror image of our Example 1---and our entire analysis---in which the labels of workers and firms are switched, would replicate the current message of the paper. In fact, one can straightforwardly generate richer economies in which, with some probability, it is the workers who are uninformed and, with some probability, it is the firms who are uninformed, that would yield our negative results for \textit{either} version of DA. In the Online Appendix, we show that \textit{no quantile stable mechanism} is a panacea.\footnote{Quantile stable mechanisms, which include DA mechanisms, guarantee that, for reported preferences, each participant receives a stable partner in the same quantile. See \cite{teo1998geometry} and references that follow.} A modification of our Example 1 demonstrates that any such stable mechanism would yield multiple equilibrium outcomes in weakly undominated strategies. 

\section*{Appendix -- Proofs}

\begin{lemma}
\label{lem_top_match}
Consider any economy $\mathcal{E}$. If firm $f$ and worker $w$ form a top-top match pair in state $\theta$, they must be matched in this state under any BNE.
\end{lemma}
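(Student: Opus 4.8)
The plan is to argue by contradiction. Suppose some BNE leaves $f$ and $w$ unmatched in state $\theta$; I will exhibit a profitable deviation for $w$. The deviation is exactly the one flagged in the discussion of Proposition~\ref{prop_uniqueness}: $w$ moves $f$ to the top of his reported list, keeping the relative order of all remaining firms unchanged. Write $w$'s equilibrium report as $R_w$ and this deviation as $R_w'$. Since firms report truthfully and worker preferences are state-independent, $f$ is $w$'s favorite firm in \emph{every} state, so it suffices to show that this single deviation weakly raises $w$'s payoff state by state and strictly raises it in state $\theta$.

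The engine of the argument is a purely structural monotonicity fact about the firm-proposing DA, which I would prove by coupling the two algorithm runs (under $R_w$ and under $R_w'$) against a fixed profile of everyone else's reports in a given state. Because $w$ is the only agent whose report changes, and $R_w$ and $R_w'$ induce the same ranking on every pair of firms other than $f$, the two runs proceed identically until the first moment at which $w$ treats his set of current proposals differently; by construction this can occur only once $f$ has proposed to $w$. At that moment $R_w'$ makes $w$ hold $f$, and since $f$ is $R_w'$-maximal, $w$ never releases $f$ thereafter. Hence $w$'s match under $R_w'$ equals either $f$ itself, or exactly his match under $R_w$ (the latter when $f$ never proposes to $w$). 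I would state and prove this ``$g'\in\{f,g\}$'' claim first, where $g$ and $g'$ denote $w$'s matches under $R_w$ and $R_w'$.

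Given the structural claim, the conclusion is immediate. In every state, $w$'s match under $R_w'$ lies in $\{f,g\}$ where $g$ is his equilibrium match; since $f$ is $w$'s true favorite, both possibilities are weakly preferred to $g$, so the deviation never hurts. In state $\theta$ the top-top property forces $f$ to propose to $w$ in the first round of DA (as $w$ is $f$'s favorite and $f$ reports truthfully), so under $R_w'$ worker $w$ holds $f$ immediately and is matched to $f$; because we assumed $w$ is not matched to $f$ in state $\theta$ and $f$ is $w$'s favorite, this is a strict gain. As state $\theta$ has positive probability (full support of $\Psi$), $w$'s expected payoff strictly increases, contradicting optimality of $R_w$ in the BNE.

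The main obstacle is the coupling claim, and in particular verifying that the two runs cannot diverge before $f$ proposes to $w$. I would make this precise using a round-based implementation of DA (legitimate since the firm-optimal stable matching is order-independent) together with an induction on rounds, checking that a holding decision of $w$ can differ across $R_w$ and $R_w'$ only when $f$ is among the firms currently available to $w$, and that a firm held by its most-preferred worker is never subsequently rejected. Everything else---state-independence of worker types, the reduction to a single report per worker, and aggregating the state-by-state comparison into an expected-payoff comparison---is routine.
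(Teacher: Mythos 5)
Your proposal is correct and follows the same route as the paper: the paper's proof is a three-sentence version of exactly this argument (shift $f$ to the top of the reported list, gain strictly in state $\theta$, lose nothing elsewhere), with your coupling claim ``$g'\in\{f,g\}$'' being the standard DA monotonicity fact the paper leaves implicit in the phrase ``without losing in other states.'' Your additional care in verifying that the two runs coincide until $f$ proposes to $w$, and that state-independence of worker preferences makes $f$ the true favorite in every state, fills in the details soundly.
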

\begin{proof}
Since firms report truthfully, in state $\theta$, firm $f$ reports worker $w$ as her top choice. If they are not matched, then $w$'s top-ranked firm is not his first true choice $f$. Therefore, $w$ could profitably deviate by shifting $f$ to the top of his ranking to strictly benefit in state $\theta$ without losing in other states.
\end{proof}

\begin{delayedproof}{prop_uniqueness}
We already discussed~(\ref{prop_uniqueness_part1}) in the text. Suppose~(\ref{prop_uniqueness_part2}) holds, the SPC* is satisfied.

Consider first any \textit{two-state} economy with sequential top-top pairs $\{(f_{i|\theta},w_{i|\theta})\}_{i \leq \min(m,n)}$ in state $\theta \in \{1,2\}$, defined in accordance with the SPC*. Since these two sequences correspond to unique stable matchings in the respective states, it is sufficient to prove that all top-top pairs must be matched under any BNE.

We apply induction on $k \leq \min(m,n)$, where $k$ is the number of the first pairs $\{(f_{i|\theta},w_{i|\theta})\}_{i \leq k}$ in the sequences above. The assertion holds for $k=1$. Indeed, by Lemma~\ref{lem_top_match}, top-top matches must be matched under any BNE. Suppose that the assertion holds for $k \geq 1$. By symmetry with respect to states, it suffices to prove that $(f_{k+1|1},w_{k+1|1})$ must be matched to each other in state $1$ under any BNE.

Suppose, towards a contradiction, that $(f_{k+1|1},w_{k+1|1})$ are not matched in state $1$ under some equilibrium yielding matching $\lambda(\theta)$ in state $\theta \in \{1,2\}$. Then, both $f_{k+1|1}$ and $w_{k+1|1}$ get less desirable assignments in state $1$ under $\lambda(1)$. Because $f_{k+1|1}$ reports truthfully, this implies that worker $w_{k+1|1}$ reports the less desirable $\lambda(1)(w_{k+1|1})$ as preferable to $f_{k+1|1}$ in his equilibrium ranking $Q$. To reach a contradiction, it suffices to find a profitable deviation.

Construct a deviation $Q'$ from $Q$ by moving firms $\{f: f \succeq_{w_{k+1|1}} f_{k+1|1}\}$ to the top while keeping truthful ranking of these firms and $Q$'s ranking of all remaining firms. Let $\lambda'(\theta)$ be the resulting stable matching in state $\theta \in \{1,2\}$ for the reported preferences. This deviation delivers $\lambda'(1)(w_{k+1|1}) \succeq_{w_{k+1|1}} f_{k+1|1} \succ_{w_{k+1|1}} \lambda(1)(w_{k+1|1})$ to worker $w_{k+1|1}$ in state $1$. 

We now show that $\lambda'(2)(w_{k+1|1}) \succeq_{w_{k+1|1}}  \lambda(2)(w_{k+1|1})$, i.e. worker $w_{k+1|1}$ cannot be worse off in state $2$. 

First, if $f_{k+1|1} \succeq_{w_{k+1|1}} \lambda(2)(w_{k+1|1})$, the proposed deviation $Q'$ does not hurt worker $w_{k+1|1}$ in state $2$, as desired. 

Second, suppose that $\lambda(2)(w_{k+1|1}) \succ_{w_{k+1|1}} f_{k+1|1}$. Then, by the SPC*, for some $l \leq k$, $f_{l|2} = \lambda(2)(w_{k+1|1})$. Therefore, $w_{k+1|1}=w_{l|2}$ by the induction hypothesis.

Towards a contradiction, suppose that $f_{l|2} =\lambda(2)(w_{k+1|1}) \succ_{w_{k+1|1}} \lambda'(2)(w_{k+1|1})$ for truthful reports. By stability of $\lambda'(2)$ for the reported preferences, it must be the case that $w_{{l_1|2}} \coloneqq \lambda'(2)(f_{l|2}) \succ_{f_{l|2}} w_{l|2}=\lambda(2)(f_{l|2})$. Since $w_{l|2}$ is $f_{l|2}$'s favorite among all $\{w_{t|2}\}_{t \geq l}$, by our construction, we must have $l_1<l$. By the induction hypothesis, $\lambda(2)(w_{{l_1|2}})=f_{{l_1|2}}$.

Because $\lambda(2)$ is stable for the equilibrium reported preferences, under those, $f_{{l_1|2}}=\lambda(2)(w_{l_1|2}) \succ_{w_{{l_1|2}}} f_{l|2}=\lambda'(2)(w_{{l_1|2}})$. By stability of $\lambda'(2)$ for reported preferences, it then follows that $w_{l_2|2} \coloneqq \lambda'(2)(f_{{l_1|2}}) \succ_{f_{{l_1|2}}} w_{{l_1|2}}=\lambda(2)(f_{{l_1|2}})$. Again, by our construction, $l_2 < l_1$ and, therefore, by the induction hypothesis, $\lambda(2)(w_{l_2|2})=f_{l_2|2}$.

We can continue recursively until we reach $l_T = 1$ for some $T \geq 1$, so that we have $w_{1|2} = \lambda'(2)(f_{l_{T-1}|2}) \succ_{f_{l_{T-1}|2}} w_{l_{T-1}|2}=\lambda(2)(f_{l_{T-1}|2})$ and $\lambda(2)(w_{1|2})=f_{1|2}$ as above. Since $\lambda(2)$ is stable for the equilibrium reported preferences, under those, $f_{1|2}=\lambda(2)(w_{1|2}) \succ_{w_{1|2}} f_{l_{T-1}|2}=\lambda'(2)(w_{1|2})$. Also, by our construction, $w_{1|2}$ is $f_{1|2}$'s favorite among all workers, and hence $w_{1|2}  \succ_{f_{1|2}} \lambda'(2)(f_{1|2})$. This contradicts the stability of $\lambda'(2)$ under the reported preferences. Therefore, we must have $\lambda'(2)(w_{k+1|1}) \succeq_{w_{k+1|1}} \lambda(2)(w_{k+1|1})$, as desired.

Finally, for economies with more than two states, an identical proof works. Indeed, the same deviation is strictly beneficial in state $1$ and weakly beneficial in any state $\theta \neq 1$.
\end{delayedproof}

\begin{delayedproof}{prop_fragility_assort}
Take any balanced market $\mathcal{M}$ such that, up to relabeling, all firms agree on $w_1 \succ w_2 \succ \ldots \succ w_n$ and all workers agree on $f_1 \succ f_2 \succ \ldots \succ f_n$. For any given non-degenerate distribution $\Psi$ over the two states, $\Theta=\{1,2\}$, consider this market augmented with firms $\{\mathbb{f}_i\}_{i \in [k]}$ and workers $\{\mathbb{w}_i\}_{i \in [k]}$.

For the new firms, choose state-independent preferences:
{\setlength{\abovedisplayskip}{3pt}
\setlength{\belowdisplayskip}{3pt}
\begin{align*}
    \mathbb{f}_1&: \mathbb{w}_1 \succ w_n \succ w_1 \succ \text{rest},\\
    \mathbb{f}_i,\,i\neq 1&:  \mathbb{w}_i \succ w_i \succ  \text{rest}.
\end{align*}}
Extend firm $f_{n-1}$'s preference in a state-dependent way. In state 1, the firm ranks $\mathbb{w}_{1}$ between $w_{n-1}$ and $w_n$ and views other new workers as least desirable. In state 2, she considers all new workers as least preferred, with $\mathbb{w}_{1}$ being the most desirable among them.

Other firms retain state-independent preferences. Each firm $f_i$, $i \leq n-k-1$, ranks all new workers as least favorite. Any firm $f_i$, $n-k \leq i \leq n-2$, ranks $\mathbb{w}_{n-i}$ between $w_{n-1}$ and $w_n$ and views other new workers as least desirable. The remaining firm $f_n$ views all new workers as least preferred, with $\mathbb{w}_{1}$ being the most preferred among them.

As concern workers, added ones have:
{\setlength{\abovedisplayskip}{3pt}
\setlength{\belowdisplayskip}{3pt}
\begin{align*}
    \mathbb{w}_1&: f_{n-1} \succ \mathbb{f}_1 \succ f_n \succ \text{rest},\\
    \mathbb{w}_i,\,i\neq 1&:  f_{n-i} \succ \mathbb{f}_i \succ  \text{rest}.
\end{align*}}
Any worker $w_i$, $i \leq k$, ranks all new firms as most preferred, with  $\mathbb{f}_{i}$ being the most desirable among them. Each worker $w_i$, $ k+1 \leq i \leq n-1$, ranks all new firms right above $f_i$ (and below $f_{i-1}$). Finally, worker $w_n$ ranks $\mathbb{f}_{1}$ between $f_{n-1}$ and $f_n$ and ranks other new firms as least preferred.

Set $\mathbb{w}_1$'s utility from $f_{n-1}$ to be sufficiently high so that his expected utility from matching with $f_{n-1}$ in state 1 and $f_n$ in state 2 is higher than his utility from matching with $\mathbb{f}_1$ in both states. Set $w_n$'s utility from $\mathbb{f}_1$ to be sufficiently close to his utility from $f_n$, so that his expected utility from matching with $f_{n-1}$ in state 1 and $f_n$ in state 2 is higher than his utility from matching with $\mathbb{f}_1$ in both states. All unspecified utilities can be set arbitrarily.

In this economy, both states have the same unique stable matching $\mu$ such that $\mu(f_{i})=w_{i}$ for any $i \in [n]$, and $\mu(\mathbb{f}_{i})=\mathbb{w}_{i}$ for any $i \in [k]$.

Consider the strategy profile in which workers $\mathbb{w}_1$ and $w_n$ both drop $\mathbb{f}_1$, worker $w_{n-1}$ drops $\{f_{n-k},f_{n-k+1},\ldots, f_{n-2}\}$, and everyone else reports truthfully. It generates unstable outcomes $\lambda(1), \lambda(2)\neq \mu$ in the respective 
states such that 
\begin{inparaenum}[(1)]
\item  for any $i \neq 1$, $\lambda(1)(\mathbb{w}_i)=\lambda(2)(\mathbb{w}_i)=f_{n-i}$;
\item for any $i \in [k]$, $\lambda(1)(w_i)=\lambda(2)(w_i)=\mathbb{f}_{i}$;
\item for any $i \in \{k+1,k+2,\ldots,n-1\}$, $\lambda(1)(w_i)=\lambda(2)(w_i)=f_{i-k}$;
\item $\lambda(1)(\mathbb{w}_1)=\lambda(2)(w_n)=f_{n-1}$; and
\item $\lambda(2)(\mathbb{w}_1)=\lambda(1)(w_n)=f_{n}$.
\end{inparaenum}
These outcomes constitute unique stable matchings for the reported preferences.

The candidate profile constitutes a BNE in weakly undominated strategies. Indeed, worker $\mathbb{w}_1$ cannot get his most preferred $f_{n-1}$ in state $2$. In order to get his second most preferred $\mathbb{f}_1$ in state $2$, $w$ needs to report $\mathbb{f}_1$ as acceptable. However, such a deviation precludes him from getting his most preferred $f_{n-1}$ in state $1$. By construction, it is unprofitable.

Similarly, worker $w_n$ cannot get firms $\{f_1,f_2,\ldots,f_{n-2}\}$ in either state. Also, he cannot get firm $f_{n-1}$ in state $1$. In order to get $\mathbb{f}_1$ in state $1$, $w_n$ needs to report $\mathbb{f}_1$ as more desirable than $f_n$. However, such a deviation precludes him from getting his most preferred $f_{n-1}$ among ``achievable'' firms in state $2$ and is thus unprofitable.

Worker $w_{n-1}$ cannot get firms $\{f_1,f_2,\ldots,f_{n-k-2}\}$ in either state. Consequently, he gets his most preferred $f_{n-k-1}$ among ``achievable'' firms and, hence, has no incentives to deviate.

Finally, other workers have no incentives to deviate since the generated matchings are unique stable for the reported preferences. 
\end{delayedproof}

\newpage 

\addcontentsline{toc}{section}{References}
\bibliographystyle{aer-oxford}
\bibliography{ms.bib}

\end{document}